\pgfplotsset{compat=1.14}
\definecolor{skyblue}{rgb}{0,0.4470,0.7410}
\definecolor{redfaded}{rgb}{0.6350,0.0780,0.1840}
\definecolor{lightgreen}{rgb}{0.5280,0.7440,0.3640}
\definecolor{newgreen}{rgb}{0.2480,0.6000,0.3240}
\newtheorem{remark}{Remark}
\newcommand{\secref}[1]{\S\ref{#1}} 
\newcommand{\norm}[1]{\left\|#1\right\|}
\newcommand{\op}[1]{\mathring{#1}}
\newcommand{\R}{\mathbb{R}}
\newcommand{\Let}{\coloneqq}
\DeclareMathOperator{\Ad}{Ad}
\DeclareMathOperator{\e}{exp}
\DeclareMathOperator*{\minimize}{minimize}
\newtheorem{theorem}{Theorem}
\newenvironment{customlegend}[1][]{%
    \begingroup
    \csname pgfplots@init@cleared@structures\endcsname
    \pgfplotsset{#1}%
}{%
    \csname pgfplots@createlegend\endcsname
    \endgroup
}%
\def\addlegendimage{\csname pgfplots@addlegendimage\endcsname}
\begin{document}

\title{Exact isoholonomic motion of the planar Purcell's swimmer}


\author{Sudin Kadam}
\address{Systems and Control Engineering Department, Indian Institute of Technology Bombay, Mumbai, India}
\curraddr{}
\email{sudin@sc.iitb.ac.in}
\thanks{}

\author{Karmvir Singh Phogat}
\address{Systems and Control Engineering Department, Indian Institute of Technology Bombay, Mumbai, India}
\curraddr{Electrical Engineering, Korea Advanced Institute of Science and Technology, Daejeon, South Korea}
\email{karmvir.p@sc.iitb.ac.in}
\thanks{}

\author{Ravi N. Banavar}
\address{Systems and Control Engineering Department, Indian Institute of Technology Bombay, Mumbai, India}
\curraddr{}
\email{banavar@iitb.ac.in}
\thanks{}

\author{Debasish Chatterjee}
\address{Systems and Control Engineering Department, Indian Institute of Technology Bombay, Mumbai, India}
\curraddr{}
\email{dchatter@iitb.ac.in}
\thanks{}

\date{}

\dedicatory{}

\begin{abstract}                          
In this article we present the discrete-time isoholonomic problem of the planar Purcell's swimmer and solve it using the Discrete-time Pontryagin maximum principle. The 3-link Purcell's swimmer is a locomotion system moving in a low Reynolds number environment. The kinematics of the system evolves on a principal fiber bundle. A structure preserving discrete-time kinematic model of the system is obtained in terms of the local form of a discrete connection. An adapted version of the Discrete Maximum Principle on matrix Lie groups is then employed to come up with the necessary optimality conditions for an optimal state transfer while minimizing the control effort. These necessary conditions appear as a two-point boundary value problem and are solved using a numerical technique. Results from numerical experiments are presented to illustrate the algorithm.
\end{abstract}
\keywords{low Reynolds number swimming; discrete optimal control; principal fiber bundle; Purcell's swimmer.}
\maketitle

\section{INTRODUCTION}\label{sec:Intro}
In the microscopic world of biological systems where inertia is negligible, viscosity dominates motion. This effect is observed in
the very low Reynolds numbers regime, in which the viscous forces far dominate the inertial forces in such biological systems.  This effectively implies that the mechanism comes to a complete halt as soon as the propulsion through changes in the shape of the biological system is stopped. A vast majority of living organisms are found to perform motion at microscopic scales at such low Reynolds number conditions.
From the modelling and mathematical perspective, for a large class of such locomotion systems the configuration space is amenable to the framework of a
principal fiber bundle \cite{ostrowski1996geometric}, \cite{bloch1996nonholonomic}, and geometric tools lead to insightful solutions to control theoretic problems. 
The configuration variables are partitioned as the {\it base} and the {\it group} variables, and the former are, usually, fully actuated.
The low Reynolds number effect gives rise to a principal kinematic form of the equations of motion, and further, the shape space as well as the structure group of many of these systems is either the Special Euclidean group $SE(3)$ or one of its subgroups; the reader is referred to \cite{kelly1995geometric}, \cite{ostrowski1998geometric} for many illustrative examples. 
%

A gait of a locomotion system on a principal fiber bundle corresponds to a closed curve in the base space of the bundle. Optimal gait design, i.e., the design of gaits that minimize control energy, is one of the open problems in the control of locomotion systems. The isoholonomic problem falls in this class: it pertains to finding closed loops in the base space that result in the desired displacement in the structure group (the gross motion of the body) with minimal control effort (movement of the limbs). The falling cat problem is one of the famous examples in this context \cite{montgomery1990isoholonomic}, and other examples include \cite{krishnaprasad1991control}, \cite{koon1997reduction}, \cite{cabrera2008base}. In \cite{krishnaprasad1991control}, for instance, the authors have solved an optimal control problem for a similar form of principal kinematic system, but in the continuous time setting. We note that a closed form solution was obtained in that case because of the amenable form of system equations. In this article we present results for the isoholonomic problem for the  3-link planar Purcell's swimmer \cite{purcell1977life}, the simplest possible microswimmer. This mechanism has given rise to considerable research in the areas of modelling, control, optimal gait design, etc.; see \cite{passov2012dynamics}, \cite{melli2006motion}, \cite{tam2007optimal}, and the references therein. However, to the best of our  knowledge, the discrete-time isoholonomic problem has not been addressed in the literature so far.


A few well-established approaches to solve the continuous-time optimal control problem involve variational principles or, more generally,  applications of the Pontryagin Maximum Principle (PMP), and/or dynamic programming techniques \cite{agrachev2013control}, \cite{milyutin1998calculus}, \cite{pontryagin1987mathematical}. However, state and control inequality constraints are not easily handled with these techniques. The necessary mathematical conditions of optimality obtained via either of the first two approaches invariably tend to be difficult to solve analytically for most engineering systems, and the last technique, further, suffers from the curse of dimensionality. More importantly, however, we would like to obtain a time-discretization of the control law to enable implementation. A conventional approach to discretization proceeds along one of the following two ways: In the first, the entire control synthesis is performed in the continuous time and the final controller is then discretized for the purpose of implementation. This is, therefore, an approximation and the performance deviates from those predicted by the continuous-time models, in particular, for nonlinear systems evolving on non-flat manifolds. In the second, the continuous-time plant is initially discretized, and then the synthesis is carried out in the discrete domain. Here, the discretization is not exact and does not preserve inherent mechanical system invariants like momentum and/or energy and/or constraints. Adopting the discrete mechanics approach, as we do here, preserves such invariants of the system and ensures that the discretization is exact \cite{kobilarov2010geometric}.

A general discrete-time PMP  that addresses a wide class of discrete-time optimal control problems in Euclidean spaces with state and action constraints appears in the work of Boltyanski \cite{bolt1975method}. A discrete-time PMP for systems evolving on matrix Lie groups was established in \cite{KarmAutomatica} by an extension of Boltyanski's techniques. In this article, with the discrete mechanics based model of the Purcell's swimmer, we adopt the approach outlined in \cite{KarmAutomatica} to obtain a discrete time control law. We arrive at necessary optimality conditions by extension of the results on matrix Lie groups to the principal kinematic form of systems where the base manifold evolves on $\mathbb{R}^n$. The discrete-time optimal control, solved by applying the PMP on matrix Lie groups, leads to a two point boundary value problem, which is then solved numerically to obtain the optimal control sequences. The utility of this work is that the results obtained can be extended to wide class of systems satisfying the principal kinematic form of equations \cite{ostrowski1996geometric}, \cite{bloch1996nonholonomic}, \cite{krishnaprasad1991control}.

This article is organized as follows. In \secref{sec:prelims} we present the mathematical preliminaries and define the topology of locomotion systems along with the principal kinematic form of equations. \secref{sec:isoholonomic_problem} explains the continuous time isoholonomic problem. We then introduce the model of the planar Purcell's swimmer in the principal kinematic form. A structure-preserving discretization of its kinematic equations in terms of the discrete connection form is presented in this section. \secref{sec:DMP} presents the formulation of the discrete isoholonomic problem for the swimmer model, followed by the necessary optimality conditions obtained by using the discrete maximum principle on matrix Lie groups. In \secref{sec:simulation}, the results of numerical experiments are presented.
\section{Preliminaries}\label{sec:prelims}
We provide a brief overview of basic notations and definitions from differential geometry that  are frequently encountered in this article. 

Let $G$ be a Lie group with $e$ as the group identity, and $\mathfrak{g}$ be the Lie algebra of the Lie group $G$. Let $ G \times G  \ni (g,h) \mapsto \Phi_g (h) \Let gh \in G$ be the \textit{left action} on the Lie group $G$  by itself. We further assume that the left group action $\Phi_{g}$ is \textit{free} , i.e., for all $g, h \in G$, $\Phi_{g}(h)=h$ implies $g=e$. 
The \textit{tangent lift} of $\Phi, T\Phi : G \times TG \to TG$ is the action
\begin{align}
T\Phi\big(g,(h,v)\big) \Let \big(\Phi_{g}(h),T_{h}\Phi_{g}(v)\big),
\end{align}
where $T_{h}\Phi_{g}(v)\Let \left.\frac{d}{ds}\right|_{s=0}\Phi_{g}\big(\gamma(s)\big)$ with a smooth curve $\gamma$ on $G$ such that $\gamma(0)=h$ and $\dot{\gamma}=v.$
The \textit{cotangent lift} of $\Phi$, $T^*\Phi : G \times T^*G \to T^*G$ is the action
\begin{equation}
T^*\Phi \big(g,(w,a)\big) \Let \big(\Phi_{g}(w), T_{\Phi_{g}(w)}^*\Phi_{g^{-1}}(a)\big),
\end{equation}
where $\big\langle T_{\Phi_{g}(w)}^*\Phi_{g^{-1}}(a), v \big\rangle \Let \big\langle a , T_{\Phi_{g}(w)}\Phi_{g^{-1}}(v)\big\rangle $ for all $a \in T^*_{w}G$, $v \in T_{\Phi_{g}(w)}G$. The \textit{adjoint action} of $G$ on $\mathfrak{g}$ is defined as 
\begin{equation}
G \times \mathfrak{g} \ni (g, \beta) \mapsto \Ad_g\beta := \left. {\frac{d}{ds}}\right|_{s=0} g\e^{s\beta}g^{-1} \in \mathfrak{g}.
\end{equation}
The \textit{coadjoint action} of $G$ on $\mathfrak{g}^*$ is the inverse dual of the adjoint action, given by 
\begin{equation}\label{coAdjoint_action}
G \times \mathfrak{g}^* \ni (g,a) \mapsto Ad^*_{g^{-1}}a \in \mathfrak{g}^*, 
\end{equation}
where $\langle \Ad_{g^{-1}}^*a, \beta \rangle \Let \langle a, \Ad_{g^{-1}} \beta \rangle$ for all $\beta \in \mathfrak{g}$ and $a \in \mathfrak{g}^*$
Consider a trivial principal fiber bundle $Q$ with the base manifold $M$ and the structure group $G$ as $Q \Let M \times G.$ A free left action of $G$ on $Q$ is induced by the left action of $G$ on the group component of $Q$ as  
    \[
   Q \ni (x,g) \mapsto \phi_h(x,g) \Let (x,hg) \in Q \;\text{ for all }\; h \in G,
    \]
see Figure \ref{Principle_fiber_bundle}.
\begin{figure}[!htb]
\centering
\includegraphics[scale=.53]{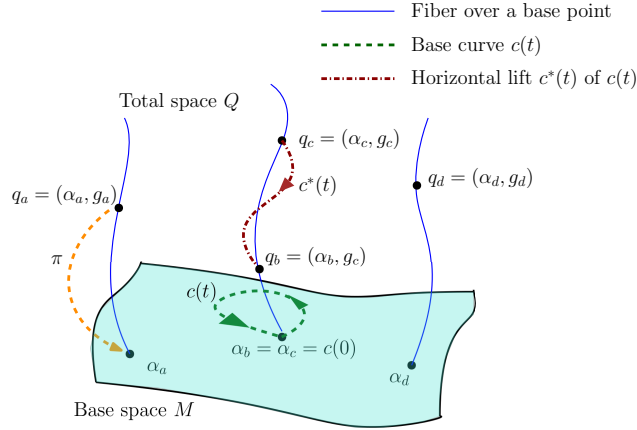}
\caption{Principal fiber bundle, horizontal lift and holonomy}
\label{Principle_fiber_bundle}
\end{figure}
\section{The planar Purcell's swimmer and the isoholonomic problem}\label{sec:isoholonomic_problem}
The planar Purcell's swimmer is a micro-swimmer made up of $3$ slender links with two rotary joints; see Figure \ref{Purcell_swimmer}. The configuration manifold $Q \Let SO(2) \times SO(2) \times SE(2)$ of the Purcell's swimmer admits the topology of a trivial principal fiber bundle $Q = M \times G $, where the base manifold $M\Let SO(2) \times SO(2)$ describes the configuration of the internal shape variables of the mechanism, and  the Lie group $G\Let SE(2)$ describes the position and the orientation of the micro-swimmer in $X-Y$ plane. 
\begin{figure}[!htb]
\centering
\includegraphics[scale=.36]{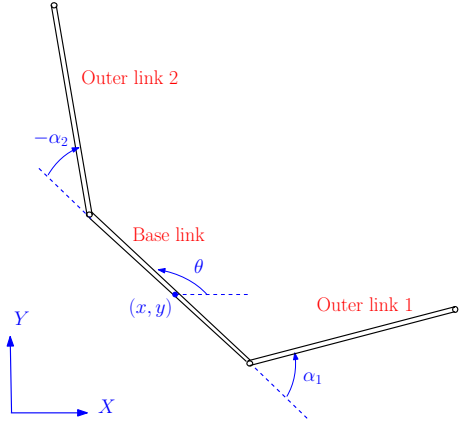}
\caption{The Purcell's swimmer}
\label{Purcell_swimmer}
\end{figure}
We further consider that the \textit{Outer Link} 1 and the \textit{Outer Link} 2 are not allowed to overlap with \textit{Link} 0; that is ensured if $\alpha_1, \alpha_2 \in \, ]-\pi,\pi[.$ Based on the Cox theory,  the kinematic model of the Purcell's swimmer depends on the lengths of the three links and viscous drag coefficient \cite{hatton2013geometric}, and is given by
\begin{equation}\label{kinematic}
\begin{aligned}
\dot{\alpha}(t) &=u(t),  \\
\dot{g}(t) & = - g(t) A\big(\alpha(t)\big) \dot{\alpha}(t), 
\end{aligned}
\end{equation}
where $\R \ni t\mapsto\big(\alpha(t),g(t)\big) \in M \times G$ defines the state trajectory of the system, $\R \ni t \mapsto u(t) \in \R^2$ defines the control trajectory of the system and  $TM \ni (a,v) \mapsto A(a)v \in \mathfrak{g}$ is the Lie algebra valued local connection form. 

\subsection{Isoholonomic problem}
Before we formally define the isoholonomic problem, let us introduce the notions related to holonomy for the system \eqref{kinematic}. For a principal kinematic system with trajectory $\R \ni t \mapsto q(t)=\big(\alpha(t),\, g(t)\big) \in M \times G$, the tangent space $T_{q(t)}Q$ for each $t$, can be split into the horizontal subspace $H_{q(t)}Q$ and the vertical subspace $V_{q(t) }Q$ using the system's connection form\footnote{A detailed discussion on connection forms and kinematic systems on trivial principal fiber bundles may be found in \cite{bloch1996nonholonomic}, \cite{kelly1995geometric}}. The horizontal subspace $H_{q(t)}Q$ for the system \eqref{kinematic} is defined through the local connection form as 
\[
H_{q(t)}Q = \big\{\big(\dot{\alpha}(t), - g(t) A\big(\alpha(t)\big) \dot{\alpha}(t)\big) \in T_{q(t)}Q\big\}.
\]
The horizontal lift of a continuous curve $\gamma : [0,T] \to M$ passing through a point $\bar{\alpha} \in M$ is a continuous curve $\gamma^* : [0,T] \to Q$ such that the following hold:
\begin{enumerate}
\item $\pi \big(\gamma^*(t)\big) = \gamma(t) \:\: for\:\: t \in [0,T]$,
\item $\frac{d}{dt} \gamma^*(t) \in H_{\gamma(t)}Q \:\: for\:\: t \in [0,T].$
\end{enumerate}
The geometric phase or holonomy of a continuous curve $\gamma : [0,T] \to M$ with $\gamma(0) = \gamma(T) = \bar{\alpha} \in M$ is $\bar{g} \in G$ that is determined by the horizontal lift $\gamma^*$ of $\gamma$ as
\[
\bar{g} \Let  g_0^{-1}g_T, \quad \text{where} \; \gamma^*(0) = (\bar{\alpha},g_0) \text{ and } \gamma^*(T) = (\bar{\alpha},g_T),
\]
see Figure \ref{Principle_fiber_bundle}.

{\it Isoholonomic problem \cite{montgomery1990isoholonomic}, \cite{koon1997reduction}}:  Among all loops with a given holonomy find a loop in the base space $M$ that corresponds to the least control effort. To define it formally, for a given holonomy $\bar{g} \in G$ and a fixed point $\bar{\alpha} \in M$, the isoholonomic problem for the locomotion kinematics \eqref{kinematic} is defined as 
\begin{equation}
\label{eq:sopt}
\begin{aligned}
\minimize_{u} &&& \mathcal{J}(u) \Let \int_0^T c(u(\tau)) d \tau\\
\text{subject to} &&&
\begin{cases}
\text{System kinematics } \eqref{kinematic},\\
\alpha(0)= \alpha(T)=\bar{\alpha}, \; g(0)^{-1}g(T)=\bar{g},\\
\bar{\alpha} \; \text{ and } \; \bar{g} \; \text{ are fixed},
\end{cases} 
\end{aligned} 
\end{equation}
where $\R^2 \ni b \mapsto c(b) \geq 0 \in \R$ accounts for the running cost. 

	The continuous-time optimal control problem \eqref{eq:sopt} is typically solved numerically, and that requires discretization of the system kinematics \eqref{kinematic}. Therefore, we turn to derive a variational integrator of the system kinematics \eqref{kinematic} that preserves the manifold structure and the system symmetries unlike standard discretization schemes such as Euler's steps and its derivatives \cite{marsden2001discrete}. Further, we define the optimal control problem \eqref{eq:sopt} in discrete-time and solve it using discrete-time Pontryagin's maximum principle on matrix Lie groups \cite{KarmAutomatica}.  
\subsection{Discrete-time kinematic model of the Purcell's swimmer}
We now derive a variational integrator of the Purcell's swimmer \eqref{kinematic} using the discrete mechanics approach
described in \cite{kobilarov2010geometric}.
Suppose $[N]$ denote the integers from zero to $N \in \mathbb{N}$. Let us discretize the time horizon $[0,\; T]$ uniformly in $N$ subintervals such that the system configurations and the control actions at discrete-time instances $\{t_k \Let k h\; |\; k \in [N]\}$ with a fixed step length $h\Let T/N>0$ are given by
\[g_k \Let g(t_k), \quad \alpha_k \Let \alpha(t_k),\quad u_k \Let u(t_k) \quad \text{for } k \in [N].\]
We assume that the system kinematics \eqref{kinematic} is actuated by a piecewise constant control, i.e., 
\begin{equation}\label{eq:app_control}
[0, T] \ni t \mapsto u(t) \Let u_k \in T^*M \text{ for } t \in [t_k, t_{k+1}{[},
\end{equation}
and the local connection form in \eqref{kinematic} is approximated as
\begin{equation}\label{eq:app_connection}
[0, T] \ni t \mapsto A\big(\alpha(t)\big) \Let A(\alpha_k) \in T^*M \text{ for } t \in [t_k, t_{k+1}{[}.
\end{equation}
Under the approximations \eqref{eq:app_control} and \eqref{eq:app_connection}, the discrete-time kinematics system for the Purcell's swimmer is derived by integrating the continuous-time kinematics \eqref{kinematic} as
\begin{equation}\label{dis_kin}
\begin{aligned}
\alpha_{k+1} &= \alpha_k + h u_k, \\
g_{k+1} &= g_k \e \big(-h A(\alpha_k) u_k \big)  
\end{aligned}
\end{equation}
where $\e: \mathfrak{g} \rightarrow G$ is the exponential map.\footnote{The exponential of $X \in \mathfrak{g}$ is the map $ \mathfrak{g} \ni X \mapsto \e(X) = \gamma(1) \in G,$ where $\gamma : \mathbb{R} \to G$ is the one-parameter subgroup generated by $X$, or equivalently the integral curve of $X$ starting at the group identity \cite[p.\ 522]{lee2003smooth}.}
\begin{remark}
Note that the preceding discretization scheme has the following properties \cite{kobilarov2010geometric}:
\begin{itemize}
    \item It preserves the configuration manifold of the system; i.e., the states always remain on the original configuration manifold, and
    \item The horizontal subspace $H_qQ$ at all $q \in Q$ is equivariant with respect to the group action.
\end{itemize}
\end{remark}

\section{Discrete isoholonomic problem for the planar Purcell's swimmer}\label{sec:DMP}
Let us define the discrete-time isoholonomic problem for the variational integrator \eqref{dis_kin} and then derive the necessary conditions for optimality for the discrete-time isoholonomic problem using the discrete-time Pontryagin's maximum principle \cite{KarmAutomatica} to arrive at an optimal control function. 
\subsection{Isoholonomic problem in discrete-time}
Considering a quadratic cost function 
\[
\R^2 \ni \mu \mapsto c(\mu) \Let \frac{1}{2}\norm{\mu}_2^2 \in \R, 
\]
the isoholonomic problem \eqref{eq:sopt} is defined in discrete-time as 
\begin{equation}
\label{eq:dsopt}
\begin{aligned}
\minimize_{\{u_k\}_{k=0}^{N-1}} &&& J(u) \Let \sum_{k=0}^{N-1}\frac{h}{2}\norm{u_k}_2^2\\
\text{subject to} &&&
\begin{cases}
\text{System kinematics } \eqref{dis_kin},\\
\alpha_0 = \alpha_N =\bar{\alpha}, \:\: g_0^{-1}g_N=\bar{g},\\
\bar{\alpha} \; \text{ and } \; \bar{g} \; \text{ are fixed}.
\end{cases} 
\end{aligned} 
\end{equation}
It is important to note that the problem \eqref{eq:dsopt} is not a fixed endpoint problem  because of the holonomy constraint, i.e., $ g_0^{-1}g_N=\bar{g} $. However, using the group invariance property of the system kinematics \eqref{dis_kin} and the cost functions $c$, the discrete-time optimal control problem \eqref{eq:dsopt} is translated to a fixed endpoint problem as explained in \secref{ssec:transfixedopt}.

\subsection{Translation of \texorpdfstring{\eqref{eq:dsopt}{} }tto a fixed endpoint problem} \label{ssec:transfixedopt}
First we prove that the optimal control problem \eqref{eq:dsopt} is invariant under the left translation of the initial group configuration $g_0 \in G$, or equivalently the optimal control problem is insensitive to the initial group configuration of the system.
Considering $g_0=\bar{g}_0 \in G$, the final group configuration of the system $g_N$ is given by 
\begin{equation*}
g_N = \bar{g}_0 \e \big(-h A(\alpha_0) u_0 \big) \circ\: \dots \: \circ \e \big(-h A(\alpha_{N-1}) u_{N-1}\big) 
\end{equation*}
that leads to the following holonomy condition 
\[
g_0^{-1}g_N = \e \big(-h A(\alpha_0) u_0 \big) \circ \dots \circ \e \big(-h A(\alpha_{N-1}) u_{N-1} \big)
\]
which is independent of the initial group configuration $\bar{g}_0$. In particular, choosing the initial group configuration as the group identity, i.e., $\bar{g}_0=e$, the discrete isoholonomic problem is defined as a fixed end point problem as: 
\begin{equation}
\label{eq:dsopt_fixedendpoint}
\begin{aligned}
\minimize_{\{u_k\}_{k=0}^{N-1}} &&& J(u) \Let \sum_{k=0}^{N-1}\frac{h}{2}\norm{u_k}_2^2\\
\text{subject to} &&&
\begin{cases}
\text{System kinematics } \eqref{dis_kin},\\
\alpha_0 = \alpha_N =\bar{\alpha}, \:\: g_0= e, \:\: g_N=\bar{g},\\
\bar{\alpha} \; \text{ and } \; \bar{g} \; \text{ are fixed}.
\end{cases} 
\end{aligned} 
\end{equation}

\subsection{The necessary conditions for optimality}
Before proceeding with the derivation of the necessary conditions for optimality, recollect that the \textit{Link} 1 and the \textit{Link} 2 are only allowed to move such that their respective link angles lie in $]-\pi,\pi[$, and therefore, in the subsequent discussions, the base manifold $M$ is identified with $\R^2.$ 
The necessary optimality conditions for the isoholonomic problem \eqref{eq:dsopt_fixedendpoint} for the planar Purcell's swimmer is given by the following theorem: 

\begin{theorem}
Let $\{ \op{u}_k \}_{k=0}^{N-1}$ be an optimal controller that solves the problem \eqref{eq:dsopt_fixedendpoint} with $\{(\op{\alpha}_k,\op{g}_k)\}_{k=0}^{N}$ the corresponding state trajectory. There exist an adjoint trajectory $\{(\zeta^k,\xi^k)\}_{k=0}^{N-1} \subset \mathfrak{g}^*\times (\R^2)^*$ and a scalar $\nu \in \{-1,0\}$ such that, with 
\[
\rho^k \Let \Big(D \e^{-1}(\op{g}^{-1}_k\op{g}_{k+1})\circ T_e\Phi_{\op{g}^{-1}_k\op{g}_{k+1}}\Big)^{*} (\zeta^k),
\]
the following hold: 
\begin{enumerate}[label=(P-\alph*),align=left,leftmargin=*]
\item \label{con_dyn}state dynamics \eqref{kinematic},
\item \label{con_adj} adjoint dynamics  
\begin{align}\label{adjoint_dynamics}
\rho^{k-1} &= \text{Ad}^*_{\e\big(-A(\op{\alpha}_k)\op{u}_k\big)}\rho^k,\\
\xi^{k-1} & = \xi^k - h \frac{\partial f}{\partial a}(\op{\alpha}_k),
\end{align}
where 
\begin{align*}
\R^2 \ni a \mapsto f(a) \Let \big\langle \zeta^k , A(a)\op{u}_k\big\rangle \in \R, 
\end{align*}
\item \label{con_optcon}optimal control, for $\nu=-1,$
\begin{align}\label{optimal_control}
\op{u}_k = \xi^k - A^*(\op{\alpha}_k) (\zeta^k), 
\end{align}
\item \label{con_nontriv}non-triviality conditions $\{(\zeta^k,\xi^k)\}_{k=0}^{N-1}$ and $\nu$ do not vanish simultaneously.
\end{enumerate}
\end{theorem}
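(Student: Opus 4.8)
The plan is to realize \eqref{eq:dsopt_fixedendpoint} as an instance of the discrete-time Pontryagin maximum principle on matrix Lie groups of \cite{KarmAutomatica}, applied to the product group $\R^2 \times SE(2)$: the base component $\alpha_k \in \R^2$ (recall that $M$ has been identified with $\R^2$) is handled by the classical Boltyanski-type discrete maximum principle \cite{bolt1975method}, while the group component $g_k \in G = SE(2)$ is handled by its Lie-group counterpart. The first step is to left-trivialize the step map in \eqref{dis_kin}, writing $g_{k+1} = g_k \Upsilon_k$ with $\Upsilon_k \Let \e\big(-hA(\alpha_k)u_k\big)$, and to form, with the index conventions of \cite{KarmAutomatica}, the step Hamiltonian
\[
H^k \Let \big\langle \zeta^k,\, \e^{-1}\big(g_k^{-1} g_{k+1}\big) \big\rangle + \big\langle \xi^k,\, \alpha_{k+1} \big\rangle + \nu\,\tfrac{h}{2}\,\norm{u_k}_2^2 ,
\]
with covectors $\zeta^k \in \mathfrak g^*$, $\xi^k \in (\R^2)^*$ and an abnormal multiplier $\nu$ normalized to lie in $\{-1,0\}$. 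The translated covector $\rho^k$ of the statement is precisely $\zeta^k$ transported through the dual of the left-trivialized differential of the Lie-algebra logarithm at $\op g_k^{-1}\op g_{k+1}$, i.e.\ $\rho^k = \big(D\e^{-1}(\op g_k^{-1}\op g_{k+1})\circ T_e\Phi_{\op g_k^{-1}\op g_{k+1}}\big)^{*}(\zeta^k)$; this is the natural object in $\mathfrak g^*$ against which body-frame variations of the group trajectory pair, and working with $\rho^k$ rather than $\zeta^k$ is what makes the group adjoint recursion close.

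First I would derive the adjoint recursions \ref{con_adj}. Introduce a left-trivialized variation $\delta g_k = \op g_k \eta_k$ with $\eta_k \in \mathfrak g$ together with an ordinary base variation $\delta\alpha_k \in \R^2$, and propagate them through \eqref{dis_kin}. Differentiating $g_{k+1} = g_k \Upsilon_k$ separates into a group contribution — since $\delta g_{k+1} = \op g_k \eta_k \Upsilon_k = \op g_{k+1}\big(\Ad_{\Upsilon_k^{-1}}\eta_k\big)$, the variation $\eta_k$ is transported by $\Ad_{\Upsilon_k^{-1}}$ — and a base contribution entering through $A(\alpha_k)$, mediated by the trivialized differential of $\e$ at $-hA(\op\alpha_k)\op u_k$. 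Collecting the coefficient of $\eta_k$ in the first variation of the cost augmented by the terminal conditions yields, on the group side, the coadjoint recursion for $\rho^k$ displayed in \ref{con_adj}; the presence of $D\e^{-1}$ and $T_e\Phi$ in the definition of $\rho^k$ is exactly what converts the pairing of $\zeta^k$ against the Lie-algebra increment $\e^{-1}(\op g_k^{-1}\op g_{k+1})$ into a pairing against body velocities, so that the recursion is purely coadjoint. Collecting instead the coefficient of $\delta\alpha_k$, and noting that the only $\alpha_k$-dependence of the dynamics beyond the trivial update $\alpha_{k+1}=\alpha_k+hu_k$ is the connection form sitting inside $\Upsilon_k$ and paired against $\zeta^k$ — indeed $\langle\zeta^k,\e^{-1}(g_k^{-1}g_{k+1})\rangle = -h\langle\zeta^k,A(\alpha_k)u_k\rangle$, which is $-h$ times the function $f$ of \ref{con_adj} — gives the affine recursion $\xi^{k-1} = \xi^k - h\,\partial f/\partial a(\op\alpha_k)$.

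Next, the optimality condition \ref{con_optcon} follows from stationarity of $H^k$ in $u_k$: since $u_k$ ranges freely over $\R^2$ there are no active control constraints, so $\partial H^k/\partial u_k = 0$. In the normal case $\nu=-1$ this is linear, the $u_k$-dependence coming from (i) $\alpha_{k+1}=\alpha_k+hu_k$, contributing $h\xi^k$, and (ii) $\langle\zeta^k,\e^{-1}(g_k^{-1}g_{k+1})\rangle = -h\langle A^*(\op\alpha_k)\zeta^k, u_k\rangle$, contributing $-hA^*(\op\alpha_k)\zeta^k$, after the same trivialization-of-$\e$ computation; the factors of $h$ cancel and one obtains $\op u_k = \xi^k - A^*(\op\alpha_k)(\zeta^k)$. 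The state equations \ref{con_dyn} are just \eqref{dis_kin} evaluated along the optimal pair, and the non-triviality \ref{con_nontriv} — that $\big(\{(\zeta^k,\xi^k)\}_{k=0}^{N-1},\nu\big)$ do not vanish simultaneously — is inherited verbatim from \cite{KarmAutomatica}; it is what excludes the degenerate multiplier and gives content to the abnormal case $\nu=0$.

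The main obstacle is the bookkeeping around the exponential map: one must use the correct right- versus left-trivialized differential $D\e$ and its inverse, keep track of whether translations act on the left or the right of $\op g_k^{-1}\op g_{k+1}$, and check that the resulting factors ($\Ad$ versus $\Ad^*$, and at $\Upsilon_k$ versus $\Upsilon_k^{-1}$) are consistent with the left-action convention fixed in \secref{sec:prelims} — a single sign or side slip here alters the recursions. A secondary, more structural point is to verify that the hypotheses of the discrete maximum principle of \cite{KarmAutomatica} genuinely apply to \eqref{eq:dsopt_fixedendpoint}: smoothness of the step maps and of $a\mapsto A(a)$ on the identified base $\R^2$, and a clean description of the terminal set $\{g_N=\bar g\}$, so that the abnormal multiplier may be normalized to $\nu\in\{-1,0\}$ and the conclusions transfer from the pure matrix-Lie-group setting of \cite{KarmAutomatica} to the present principal-bundle (product) setting.
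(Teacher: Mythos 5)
Your proposal is correct and follows essentially the same route as the paper: the paper's proof simply defines the Hamiltonian $H^{\nu}(\zeta,\xi,g,\alpha,u)=\tfrac{h\nu}{2}\langle u,u\rangle-h\langle\zeta,A(\alpha)u\rangle+\langle\xi,\alpha+hu\rangle$ (identical to your $H^k$ after substituting $\e^{-1}(g_k^{-1}g_{k+1})=-hA(\alpha_k)u_k$ and $\alpha_{k+1}=\alpha_k+hu_k$) and matches conditions \ref{con_dyn}--\ref{con_nontriv} one-for-one against the appendix version of the discrete maximum principle on matrix Lie groups, with the free-control stationarity $D_uH^{\nu}=0$ yielding \ref{con_optcon}. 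The first-variation derivation of the adjoint recursions that you sketch is exactly the content the paper delegates to the cited theorem of \cite{KarmAutomatica}, so it adds detail but not a different argument.
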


\begin{proof}
We now apply the discrete-time PMP presented in the Appendix A to the optimal control problem \eqref{eq:dsopt_fixedendpoint}. We define the Hamiltonian
\begin{align}\label{Hamiltonian}
& \mathfrak{g}^*\times(\R^2)^*\times G \times \R^2 \times \R^2 \ni (\zeta,\xi,g,\alpha,u) \mapsto \nonumber \\
& H^{\nu} (\zeta,\xi,g,\alpha,u) \Let \frac{h \nu}{2}\langle u, u \rangle - h \langle \zeta, A(\alpha)u\rangle + \nonumber \\
& \quad\quad\quad\quad\quad\quad\quad\quad \langle \xi, \alpha + h u \rangle \in \R
\end{align} 
for $\nu \in \R.$
Applying Theorem \eqref{thm:PMP} leads to the following set of conditions:
\begin{enumerate}
\item Condition \ref{con_dyn} is identical to condition \ref{M:sys}.
\item With the definition of the Hamiltonian \eqref{Hamiltonian}, condition \ref{M:adj} immediately leads to condition \ref{con_adj}. 
\item 
Note that $u_k \in \R^2 $ for each $k \in [N-1]$ and therefore, the Hamiltonian non-positive gradient condition leads to
\[
D_{u} H^{\nu} (\zeta^k,\xi^k,g_k,\alpha_k,u_k)= 0 \quad \text{ for each } \; k \in [N-1],
\] 
and for $\nu=-1$, it gives optimal control \ref{con_optcon}.
\item Condition \ref{con_nontriv} is identical to condition \ref{M:nontriv}.
\end{enumerate}
\end{proof}

\section{Numerical experiments}\label{sec:simulation}
We now present the numerical experiments for solving the necessary optimality conditions for an isoholonomic problem of the planar Purcell's swimmer. The viscous drag coefficient $k$ is chosen as one and also all the $3$ links are taken to be of unit length in the discrete kinematic model \eqref{dis_kin}. The time discretization step $h$ is taken as $0.01$.  Starting from the configuration of the swimmer where the base link is aligned with the x-axis of the reference frame and the $2$ outer links are extended outwards to be collinear with the base link, a translational displacement of the base link of $0.1m$ is commanded in both $x$ and $y$ directions along with the condition that the shape should return back the initial position.

	The initial conditions at $t_0=0$ seconds are taken to be:
\[
\alpha(t_0) = \big(\alpha_1(t_0), \alpha_2(t_0) \big)= (0,0),\quad g(t_0)= (0, 0, 0 ),
\] 
where the group position $g(t)$ is written as $(x(t),y(t),\theta(t))$ as shown in figure \ref{Purcell_swimmer}. The terminal conditions at $t_N=100$ seconds are:
\begin{align*}
\alpha(t_N) = \big( \alpha_1(t_N), \alpha_2(t_N) \big)=  (0,0), 
\quad g(t_N) = (0.1, 0.1, 0 ).
\end{align*}
The necessary optimality conditions \ref{con_dyn}, \ref{con_adj} with these boundary conditions form a two-point boundary value problem in $\alpha, g, \rho$ and $\xi$ with the control at each time instant is given by the condition \ref{con_optcon}. The boundary value problem is solved using Casadi \cite{Andersson2013b}, an open-source software tool with interior point technique for solving optimization problems. Figures \ref{Limb_angles_variation} to \ref{Rotational_position_variation} show the time variation of the states and control action. Figure \ref{Limb_angles_shape_space} gives the loop in the base manifold. Figure \ref{PhasePortrait} shows the time-lapse representation of the swimmer's shape and macro position at 5 second interval. The swimmer's link lengths in figure \ref{PhasePortrait} are scaled down by a factor of $50$ for better representation.

We observe that the swimmer, instead of directly moving towards the desired final position of $g(t_N) = (0.1, 0.1, 0)$, moves first in the negative $y$ direction. Also even the $x$ position of the swimmer overshoots the desired $x$ value. This is due to the constraint that there should not be any net shape change at the end of the entire maneuver. The optimal solution also results in the low values of the angular rates of the swimmer's limbs.

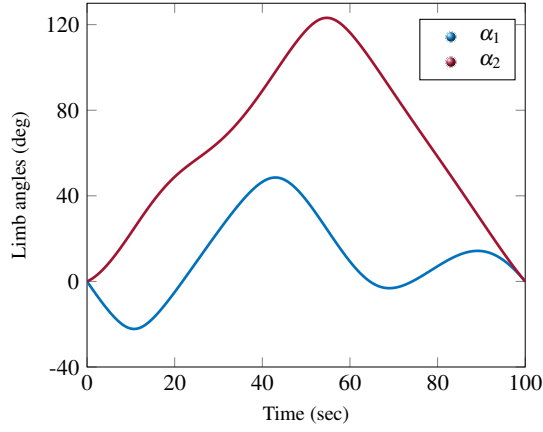
\begin{figure}
\centering
\begin{tikzpicture}[scale=0.85]
        \begin{axis}[yscale=1,xmin=0,xmax=100,ymin=-40,ymax=130, xlabel=Time (sec), ylabel near ticks, ylabel= Limb angles (deg),xtick={0,20,...,100}, ytick={-40,0,...,140}, xticklabels={0,20,...,100}, yticklabels={-40,0,...,140},label style={font=\small},]
            \addplot[skyblue, very thick] table[x=Time, y=alpha1, col sep=comma]{Data.txt};
            \addplot[redfaded, very thick] table[x=Time, y=alpha2, col sep=comma]{Data.txt};
        \end{axis}
	\begin{customlegend}[
legend entries={ 
$\alpha_1$,
$\alpha_2$
},
legend style={at={(6.7,5.45)},font=\footnotesize}] 
    \addlegendimage{mark=ball,ball color=skyblue, draw=white}
    \addlegendimage{mark=ball,ball color=redfaded,draw=white}
\end{customlegend}
\end{tikzpicture}
\caption{Limb angles trajectory}
\label{Limb_angles_variation}
\end{figure}

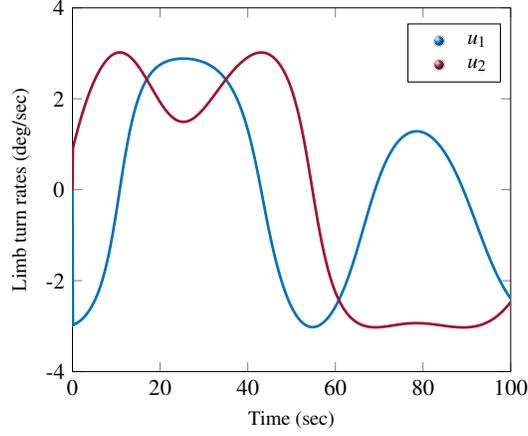
\begin{figure}
\centering
\begin{tikzpicture}[scale=0.85]
        \begin{axis}[yscale=1,xmin=0,xmax=100,ymin=-4,ymax=4, xlabel=Time (sec), ylabel near ticks, ylabel= Limb turn rates (deg/sec) ,xtick={0,20,...,100}, ytick={-4,-2,...,4}, xticklabels={0,20,...,100}, yticklabels={-4,-2,...,4},label style={font=\small},]
            \addplot[skyblue, very thick] table[x=Time, y=u1, col sep=comma]{Data.txt};
            \addplot[redfaded, very thick] table[x=Time, y=u2, col sep=comma]{Data.txt};
        \end{axis}
	\begin{customlegend}[
legend entries={ 
$u_1$,
$u_2$
},
legend style={at={(6.7,5.45)},font=\footnotesize}] 
    \addlegendimage{mark=ball,ball color=skyblue, draw=white}
    \addlegendimage{mark=ball,ball color=redfaded,draw=white}
\end{customlegend}
\end{tikzpicture}
\caption{Optimal control trajectory}
\label{Limb_rates_variation}
\end{figure}

\begin{figure}
\centering
\begin{tikzpicture}[scale=0.85]
        \begin{axis}[yscale=1,xmin=0,xmax=100,ymin=-0.35,ymax=0.35, xlabel=Time (sec), ylabel near ticks, ylabel= Position (m) ,xtick={0,20,...,100}, ytick={-0.35,-0.2,0,0.2,0.35}, xticklabels={0,20,...,100}, yticklabels={-0.35,-0.2,0,0.2,0.35},label style={font=\small},]
            \addplot[skyblue, very thick] table[x=Time, y=x, col sep=comma]{Data.txt};
            \addplot[redfaded, very thick] table[x=Time, y=y, col sep=comma]{Data.txt};
        \end{axis}
	\begin{customlegend}[
legend entries={ 
$x$,
$y$
},
legend style={at={(6.7,5.45)},font=\footnotesize}] 
    \addlegendimage{mark=ball,ball color=skyblue, draw=white}
    \addlegendimage{mark=ball,ball color=redfaded,draw=white}
\end{customlegend}
\end{tikzpicture}
\caption{Translational trajectory of the system}
\label{Translational_position_variation}
\end{figure}
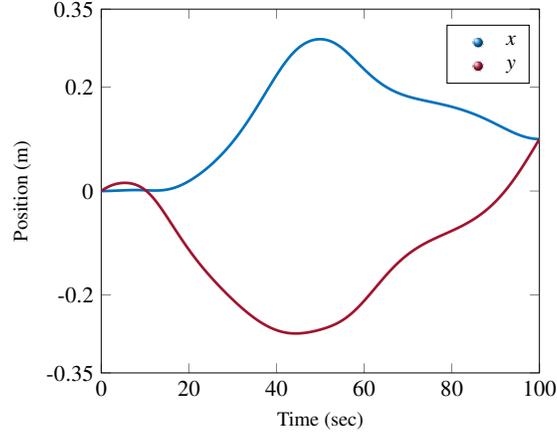

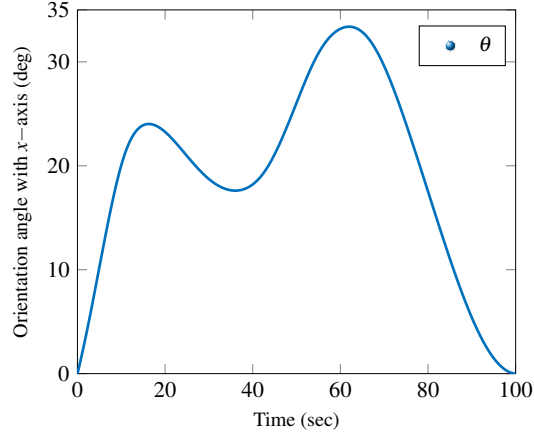
\begin{figure}
\centering
\begin{tikzpicture}[scale=0.85]
        \begin{axis}[yscale=1,xmin=0,xmax=100,ymin=0,ymax=35, xlabel=Time (sec), ylabel near ticks, ylabel= Orientation angle with $x-$axis (deg) ,xtick={0,20,...,100}, ytick={0,10,...,30,35}, xticklabels={0,20,...,100}, yticklabels={0,10,...,30,35},label style={font=\small},]
            \addplot[skyblue, very thick] table[x=Time, y=theta, col sep=comma]{Data.txt};            
        \end{axis}
	\begin{customlegend}[
legend entries={ 
$\theta$
},
legend style={at={(6.7,5.45)},font=\footnotesize}] 
    \addlegendimage{mark=ball,ball color=skyblue, draw=white}
    \addlegendimage{mark=ball,ball color=redfaded,draw=white}
\end{customlegend}
\end{tikzpicture}
\caption{Orientation trajectory on the $X-Y$ plane}
\label{Rotational_position_variation}
\end{figure}

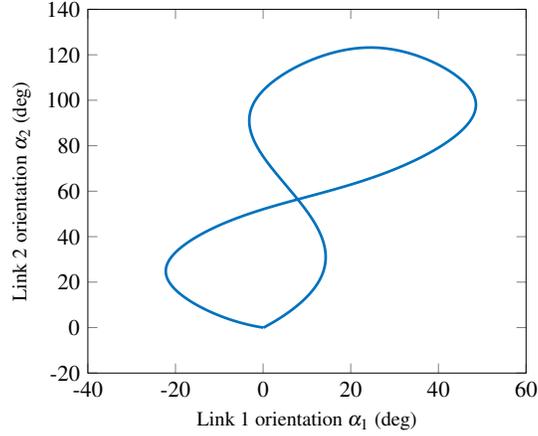
\begin{figure}
\centering
\begin{tikzpicture}[scale=0.85]
        \begin{axis}[yscale=1,xmin=-40,xmax=60,ymin=-20,ymax=140, xlabel= Link 1 orientation $\alpha_1$ (deg), ylabel= Link 2 orientation $\alpha_2$ (deg), xtick={-40,-20,...,60}, ytick={-20,0,...,140}, xticklabels={-40,-20,...,60}, yticklabels={-20,0,...,140},label style={font=\small},]
            \addplot[skyblue, very thick] table[x=alpha1, y=alpha2, col sep=comma]{Data.txt};
        \end{axis}
\end{tikzpicture}
\caption{Shape loop in the base space}
\label{Limb_angles_shape_space}
\end{figure}

\DTLloaddb[noheader=false]{PhasePortraitData}{PhasePortrait.txt}
\begin{figure}
\centering
\begin{tikzpicture}[scale=0.97]
\begin{axis}[yscale=0.9,xscale=0.9,xmin=-0.05,xmax=0.35,ymin=-0.35,ymax=0.15, xlabel=$x$ (m), ylabel near ticks, ylabel= $y$ (m) ,xtick={-0.05,0.05,0.15,0.25,0.35}, ytick={-0.35,-0.25,-0.15,-0.05,0.05,0.15}, xticklabels={-0.05,0.05,0.15,0.25,0.35}, yticklabels={-0.35,-0.25,-0.15,-0.05,0.05,0.15},label style={font=\small},]
\pgfplotsextra{
\DTLforeach*{PhasePortraitData}{\Time=Time,\alphaR=alpha1, \alphaL=alpha2, \x=x, \y=y,\theta=theta}{%
\pgfmathsetmacro{\l}{0.02}
\pgfmathsetmacro{\lCxR}{\x+\l/2*cos(\theta)}
\pgfmathsetmacro{\lCyR}{\y+\l/2*sin(\theta)}
\pgfmathsetmacro{\lCxL}{\x-\l/2*cos(\theta)}
\pgfmathsetmacro{\lCyL}{\y-\l/2*sin(\theta)}
\pgfmathsetmacro{\lRxR}{\lCxR+\l*cos(\theta+\alphaR)}
\pgfmathsetmacro{\lRyR}{\lCyR+\l*sin(\theta+\alphaR)}
\pgfmathsetmacro{\lLxL}{\lCxL-\l*cos(\theta+\alphaL)}
\pgfmathsetmacro{\lLyL}{\lCyL-\l*sin(\theta+\alphaL)}
\coordinate (l0_L) at (\lCxL,\lCyL);
\coordinate (l0_R) at (\lCxR,\lCyR);
\coordinate (l1_R) at (\lRxR,\lRyR);
\coordinate (l2_L) at (\lLxL,\lLyL);

\draw[redfaded, very thick](l0_L) -- (l0_R);
\draw[lightgreen, very thick](l0_R) -- (l1_R);
\draw[skyblue, very thick](l0_L) -- (l2_L);
}
}
\draw[->, thick](0.025,-0.22)--(0.07,-0.265);
\draw[->, thick](0.255,-0.15)--(0.225,-0.08);
\end{axis}
\node [text=red] at (1.0, 3.4) {\small Start};
\node [text=red] at (2.3, 4.9) {\small End};
\end{tikzpicture}
\caption{Phase portrait of the swimmer on the $x-y$ plane}
\label{PhasePortrait}
\end{figure}
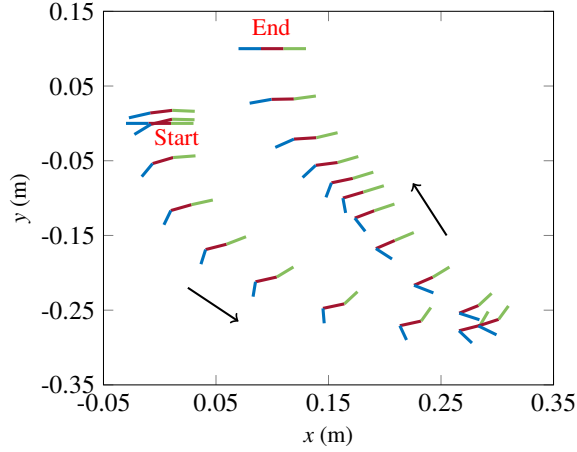



\section*{APPENDIX}

\subsection{Discrete maximum principle on matrix Lie groups}
In this section we state a modified version of a theorem on the discrete maximum principle on matrix Lie groups for fixed end point optimal control problems for systems whose kinematics evolves on a matrix Lie group $G$, and the dynamics evolves on a Euclidean space $\mathbb{R}^{n_x}$ \cite{KarmAutomatica}. The modification of the original theorem is inspired by the need of applying the discrete maximum principle to principal kinematic form of systems such as that of the Purcell's swimmer. The system model \eqref{dis_kin} is at the kinematic level and has base kinematics which evolves on $\mathbb{R}^n$ and the group kinematics which evolves on a matrix Lie group $G$. The general form of discrete-time kinematics for $t = 0, 1, \dots , N-1$ for these systems is given as follows:
\begin{equation}\label{eq:appendix_dynamics}
\begin{aligned}
x_{t+1} &= f_t(q_t,x_t,u_t), \\
q_{t+1} &= q_t s_t(q_t,x_t,u_t)
\end{aligned}
\end{equation}
where, 
\begin{align*}
&x_t \in M \subset \mathbb{R}^{n_x}, q_t \in G \text{ are the states of the system}, \\
& u_t \in \mathbb{U}_t \subset \mathbb{R}^{n_u}, \mathbb{U}_t \neq \emptyset \text{ is a set of feasible control actions}, \\
& f_t \in G \times \mathbb{R}^{n_x} \times \mathbb{R}^{n_u} \mapsto \mathbb{R}^{n_x} \text{ is a map for kinematics on } M, \\
& s_t \in G \times \mathbb{R}^{n_x} \times \mathbb{R}^{n_u} \mapsto G \text{ is a map for kinematics on $G$}. \\
& 
\end{align*}
The following assumptions are made:
\begin{enumerate}[label=(A-\alph*),align=left,leftmargin=*]
\item \label{Assumption1} The maps $s_t, f_t, g_t, c_t$ are smooth.
\item \label{Assumption2} There exists an open set $\mathcal{O} \subset G$ such that:
\begin{itemize}
\item \label{Assumption3} the exponential map $exp : \mathcal{O} \mapsto e(\mathcal{O}) \subset G$ is a diffeomorphism, and
\item \label{Assumption4} the integration step $s_t \in e(\mathcal{O}) \: \forall t$.
\end{itemize}
\item \label{Assumption5} The set of feasible control actions $\mathbb{U}_t$ is convex and compact for each $t = 0, \dots , N - 1.$
\end{enumerate}
\begin{remark}
The reader may note that there are following two key differences between the system \eqref{eq:appendix_dynamics} and that in the original theorem \cite[Theorem 2.5]{KarmAutomatica}:
\begin{enumerate}
    \item The map $s_t$ in \eqref{eq:appendix_dynamics} is a function of the control input $u_t$ in addition to $q_t, x_t$, and
    \item The dynamics in the system in \cite{KarmAutomatica} is replaced by the kinematics on the base space $M$ which has the similar evolution on $\mathbb{R}^{n_x}$ under the map $f_t$.
\end{enumerate}
However, the Pontryagin maximum principle \cite[Theorem 2.5]{KarmAutomatica} carries over with minor and obvious changes to the one that we provide below; in particular, the proof technique stays identical to that in \cite{KarmAutomatica}. We suppress the tedious details in the interest of brevity.
\end{remark}
With $c_t : G \times \mathbb{R}^{n_x} \times \mathbb{R}^{n_u} \mapsto \mathbb{R}$ as a map that accounts for the cost-per-stage for each $t=0, \dots, N-1$ the problem is to synthesize a controller by minimizing the performance index
\begin{equation}
\label{eq:appsopt}
\begin{aligned}
& \mathcal{J} (q,x,u) = \sum_{t=0}^{N-1} c_t(q_t, x_t, u_t) \\
\text{subject to} &
\begin{cases}
\text{System dynamics } \eqref{eq:appendix_dynamics},\\
u_t \in \mathbb{U}_t,\\
(q_0,x_0) = (\bar{q}_0, \bar{x}_0),\\
(q_N,x_N) = (\bar{q}_N, \bar{x}_N),\\
\text{Under the assumptions \ref{Assumption1} to \ref{Assumption5}}.
\end{cases} 
\end{aligned} 
\end{equation}

\begin{theorem}\label{thm:PMP}
Let $\{ u^o_t \}^{N-1}_{t=0}$ be an optimal controller that solves the above problem. Define a Hamiltonian 
\begin{align*}
&[N] \times \mathfrak{g}^{*} \times (\mathbb{R}^{n_x})^{*} \times G \times \mathbb{R}^{n_x} \times \mathbb{R}^{n_u} \ni (\tau , \zeta , \xi , q, x ,u) \mapsto \\
& H^{\nu} (\tau, , \zeta , \xi , q, x ,u) = \nu c_{\tau} (q,x,u) + \langle \zeta , \e^{-1}(s_{\tau}(q_{\tau},x,u)) \rangle_{\mathfrak{g}} \\
& \qquad \qquad \qquad \qquad + \langle \xi , f_{\tau} (q,x,u)\rangle \in \mathbb{R}
\end{align*}
for $\nu \in \mathbb{R}$. Then there exists 
\begin{itemize}
    \item an adjoint trajectory $\{ (\zeta^t, \xi^t)\}_{t=0}^{N-1} \subset \mathfrak{g}^* \times (\mathbb{R}^{n_x})^{*}$, and
    \item a scalar $\nu \in \{-1, 0\}$
\end{itemize}
such that, with 
\begin{equation}\nonumber
\gamma_t = (t, \zeta^t, \xi^t, \mu^t, q_t, x_t, u_t),
\end{equation}
the following conditions hold:
\begin{enumerate}[label=(M-\alph*),align=left,leftmargin=*]
\item \label{M:sys} state dynamics
\begin{align*}
x_{t+1} &= D_{\xi} H^{\nu} (\gamma_t),  \\ 
q_{t+1} &=q_t \e^{D_{\zeta} H^{\nu} (\gamma_t)},
\end{align*}
\item \label{M:adj} adjoint dynamics
\begin{align*}
\xi_{t-1} &= D_x H^{\nu} (\gamma_t),\\
\zeta^{t+1} &= Ad^{*}_{\e^{-D_{\zeta} H^{\nu}, (\gamma_t)}}\zeta^t + T_e^{*}\Phi_{q_t} (D_q H^{\nu} (\gamma_t),
\end{align*}
\item \label{M:ham_max} Hamiltonian non-positive gradient condition
\begin{equation}\nonumber
\langle D_u H^{\nu} (t, \zeta^t, \xi^t, q_t, x_t, u_t), w - \mathring{u}_t \rangle \leq 0  \text{ for all } w \in \mathbb{U}_t,
\end{equation}
\item \label{M:nontriv} Non-triviality condition: The adjoint variables $\{ (\zeta^t, \xi^t) \}_{t=0}^{N}$ and the scalar $\nu$ do not vanish simultaneously.
\end{enumerate}
\end{theorem}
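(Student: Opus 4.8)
The plan is to reproduce, on the lifted Lie-algebra-valued dynamics, the Boltyanski-style ``method of tents'' as adapted to matrix Lie groups in \cite{KarmAutomatica}, \cite{bolt1975method}, tracking how first-order control variations propagate through both the base kinematics on $\mathbb{R}^{n_x}$ and the group kinematics on $G$. The enabling observation is Assumption \ref{Assumption2}: since $\e : \mathcal{O} \to \e(\mathcal{O})$ is a diffeomorphism and each integration step $s_t(q_t,x_t,u_t)$ lies in $\e(\mathcal{O})$, I introduce the Lie-algebra coordinate $\eta_t \Let \e^{-1}\big(s_t(q_t,x_t,u_t)\big) \in \mathfrak{g}$ and rewrite the group update as $q_{t+1} = q_t \e(\eta_t)$. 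This puts the whole system into a form on which a left-trivialized variational calculus can be carried out, and it is exactly this $\eta_t$ that the Hamiltonian pairs against $\zeta$.

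First I would set up the variational framework along the optimal trajectory $\{(\op{q}_t,\op{x}_t,\op{u}_t)\}$. Because the problem is in discrete time, the elementary variations are single-stage perturbations: for a fixed stage $t$ and any admissible $w \in \mathbb{U}_t$, replace $\op{u}_t$ by a convex perturbation toward $w$ and compute the induced first-order displacements $\delta x_s, \delta q_s$ for $s > t$. The base displacement $\delta x_s$ obeys the usual linearized recursion driven by the partial derivatives of $f_s$. For the group I left-trivialize, writing $\delta q_s = T_e\Phi_{q_s}(\omega_s)$ with $\omega_s \in \mathfrak{g}$; linearizing $q_{s+1} = q_s \e(\eta_s)$ and invoking the derivative $D\e$ of the exponential map yields a linear recursion for $\omega_s$ in which an $\Ad$ factor appears from the right multiplication by $\e(\eta_s)$, together with terms from the differentials of $s_s$ in $q, x, u$. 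Convexity and compactness of $\mathbb{U}_t$ (Assumption \ref{Assumption5}) ensure that the collection of all such first-order variations, accumulated over the stages, generates a convex approximating cone — a \emph{tent} — to the attainable set at the terminal time $N$ in the product $\mathbb{R}^{n_x} \times \mathfrak{g}$.

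Next I would encode optimality as a separation statement. Optimality of $\{\op{u}_t\}$ under the fixed terminal constraint $(q_N,x_N) = (\bar q_N, \bar x_N)$ means that no admissible first-order variation can both hold the terminal state fixed and strictly decrease $\mathcal{J}$; adjoining the running cost as an extra coordinate, this is precisely the condition that the tent of attainable-plus-cost variations and the cost-decrease ray are separated. Boltyanski's tent/separation theorem then furnishes a nonzero covector consisting of a multiplier $\nu \le 0$ on the cost coordinate (normalized to $\nu \in \{-1,0\}$) and a terminal covector on $\mathbb{R}^{n_x} \times \mathfrak{g}$. Propagating this terminal covector backward through the adjoints of the linearized recursions of the previous step produces the adjoint trajectory $\{(\zeta^t,\xi^t)\}$: the base block gives $\xi^{t-1} = D_x H^\nu(\gamma_t)$, and the group block, obtained by dualizing the left-trivialized recursion, gives exactly the coadjoint update $\zeta^{t+1} = \Ad^*_{\e(-D_\zeta H^\nu(\gamma_t))}\zeta^t + T_e^*\Phi_{q_t}\big(D_q H^\nu(\gamma_t)\big)$ of condition \ref{M:adj}. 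Testing the separating covector against the variation toward each $w \in \mathbb{U}_t$ yields the non-positive gradient condition \ref{M:ham_max}, and non-triviality \ref{M:nontriv} is immediate from the nonvanishing of that covector.

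The main obstacle will be the group-variational bookkeeping in the two middle steps: correctly linearizing $q_{t+1} = q_t \e(\eta_t)$ in left-trivialized coordinates and then dualizing to recover the coadjoint form of the adjoint recursion. This demands careful use of the derivative-of-exponential map $D\e$ (equivalently $D\e^{-1}$) together with the tangent and cotangent lifts $T_e\Phi_{q_t}$, $T_e^*\Phi_{q_t}$, and precise tracking of where $\Ad$ versus $\Ad^*$ enters. Since the present system differs from \cite[Theorem 2.5]{KarmAutomatica} only in that $s_t$ now also depends on $u_t$ and that the Euclidean block is base kinematics rather than dynamics, these differentials merely acquire extra $\partial s_t/\partial u$ (respectively $\partial f_t/\partial u$) terms that feed solely into condition \ref{M:ham_max}; the tent construction and the separation argument are otherwise unchanged, so the remainder of the proof transcribes that of \cite{KarmAutomatica}.
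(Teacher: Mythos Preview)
Your proposal is correct and aligns with the paper's treatment: the paper does not give an independent proof of this theorem but only remarks that \cite[Theorem~2.5]{KarmAutomatica} carries over with minor changes (the $u$-dependence of $s_t$ and the base-kinematics role of $f_t$), suppressing the details. Your sketch is precisely that carried-over argument --- Boltyanski's tent/separation method in left-trivialized Lie-algebra coordinates, with the extra $\partial s_t/\partial u$ term feeding into \ref{M:ham_max} --- so the approach is essentially identical.
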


\bibliographystyle{ieeetr}
\bibliography{SK_RNB_KSP_DC_CDC2018}
\end{document}